\newtheorem{theorem}{Theorem}
\newtheorem{proposition}[theorem]{Proposition}
\newtheorem{example}{Example}
\def\CA{{\cal A}} \def\CB{{\cal B}}  
\def\CE{{\cal E}}  \def\CG{{\cal G}} 
\def\CI{{\cal I}}   \def\CL{{\cal L}}
 \def\CN{{\cal N}}  
  \def\CS{{\cal S}}
\def\IC{\relax{\rm l\kern-.50 em C}}
\def\IE{\relax{\rm l\kern-.12 em E}}
\def\IK{\relax{\rm l\kern-.18 em K}}
\def\IL{\relax{\rm I\kern-.18 em L}}
\def\IN{\relax{\rm I\kern-.18 em N}}
\def\IR{\relax{\rm I\kern-.18 em R}}
\def\<#1>{\langle#1\rangle}
\def\d<#1>{\langle\langle#1\rangle\rangle}
\def\vfield{{\mathfrak{X}}}
\title{Reduction of Lie-Jordan algebras: Classical}
\author{F. Falceto\from{zar}, 
L. Ferro\from{mad}\from{nap}, 
A. Ibort\from{mad} 
\atque
 G. Marmo\from{mad}\from{nap} 
}
\begin{document}

\maketitle

\begin{abstract} 
In this paper we present a unified algebraic framework to discuss
the reduction of classical and quantum systems. The underlying algebraic 
structure is a Lie-Jordan algebra supplemented, in the quantum case,
with a Banach structure. We discuss the reduction by symmetries, 
by constraints as well as the possible, non trivial,  combinations of 
both. We finally introduce a new, general framework to perform  
the reduction of physical systems in an algebraic setup.
\end{abstract}

% \noindent\textit{Keywords} Lie-Jordan algebras, $C^\star-$algebras, reduction, GNS construction
% \noindent \textit{MSC:} 17C65,  46L70, 81P16

%-------------------------------------------------------------------
%%  Section  1
\section{Introduction}
\indent
This paper is the first part of two that jointly study the reduction procedure
in classical and quantum mechanics. One of the difficulties 
when carrying out this program is to find the appropriate common language for both
classical and quantum physics.

Quantum mechanics has been mainly formulated, since its foundations, in
algebraic language \cite{Neu32}.
The observables are elements of a $C^*$ 
algebra and the states are functionals in this space\cite{Ha96}. 
Alternative, geometrical approaches to Quantum
Mechanics have been formulated recently \cite{Ci84}-\cite{ClementeGallardo:2008wa}. 

On the other hand the language of classical mechanics has been mainly geometrical \cite{Abr67}
and in this framework different reduction procedures have been introduced like Marsden-Weinstein reduction,
symplectic reduction, Poisson reduction,... However, it was soon realised that these procedures have their algebraic 
counterpart \cite{Gr94}, \cite{Ib97}, \cite{Cal04}. 

In this contribution we will focus mainly in classical mechanics, while the second part
is devoted to the study of the quantum case. In both parts we will adopt a common algebraic language in terms of Lie-Jordan
algebras, supplemented in the quantum case with a topological Banach space structure.

In the next section we start by a succinct description of Lie-Jordan algebras and its connection with
Poisson algebras and $C^*$ algebras. In the third section we discuss the reduction
procedures for Poisson manifolds in the presence of symmetries or constraints.
Section 4 contains the main results of the paper. In it we introduce a generalisation 
of the previous reductions and it is illustrated
with an example. Finally section 5 is dedicated to the discussion of a possible
extension of the reduction from the classical framework to the quantum one.

%-------------------------------------------------------------------
%%  Section  2
\section{Lie-Jordan algebras}

A Lie-Jordan algebra $(\CL,\circ,[\ ,\ ])$ is the combination of
a real, abelian algebra (Jordan algebra) $(\CL,\circ)$ and a Lie algebra $(\CL,[\ ,\ ])$, i.e. 
$[\ ,\ ]$ is an antisymmetric, bilinear bracket that, for any $a, b, c\in \CL$, fulfils
the Jacobi identity
$$[a,[b,c]]+[b,[c,a]]+[c,[a,b]]=0.$$
In addition we require two compatibility conditions between
the two operations: the Leibniz rule
 $$[a\circ b, c]=a\circ[b,c]+[a,c]\circ b,$$
and the associator identity
 $$(a\circ b)\circ c-a\circ(b\circ c)= \hbar^2[[a,c],b],$$
for some  $\hbar\in{\mathbb R}$. Actually if $\hbar\not=0$
we can take it one by an appropriate rescaling of any of the two 
operations  that do not affect the rest of the properties of 
the algebra. The reason why we introduced the constant $\hbar$,
apart from its obvious physical meaning, 
is because in the {\it classical} limit, 
$\hbar=0$, the Jordan algebra becomes associative 
(we shall call it associative Lie-Jordan algebra)
and $(\CA,\circ,[\,,\,])$ is a Poisson algebra.

Notice that given a Lie-Jordan algebra $\CL$, we can define on $\CL^{\mathbb C}$
the following product
$$a\cdot b=a\circ b -i\hbar [a,b]$$
that makes $(\CL^{\mathbb C},\cdot)$ an associative algebra. Moreover
we can introduce the following antilinear involution
$$(a+ib)^*=a-ib,$$
that is an antihomomorphism of the algebra.

Conversely, given a complex associative algebra with 
an antihomomorphism $*$, antilinear and involutive 
$(\CA,\cdot,*)$, the selfadjoint elements  
$${\cal A}_{\rm sa}=\{x\in{\cal A}\vert x^*=x\}$$
with the operations defined by
$$a\circ b=\frac12(a\cdot b+b\cdot a),\qquad
[a,b]=\frac{i}{2\hbar}(a\cdot b-b\cdot a).$$
form a Lie-Jordan algebra
(${\cal A}_{\rm sa},\circ,[\ ,\ ]$)
with $\hbar\not=0$.

In the next section we will discuss at length
the case of associative Lie-Jordan algebras
and its reduction in the context of 
Poisson manifolds.

\section{Reduction of Poisson algebras}

The first example of Poisson algebra that we will consider
is the set of smooth functions in a Poisson manifold $M$,
($C^\infty(M)$, $\circ$, $\{ , \}$), where $\circ$ is the pointwise product
of functions and, for $f,g\in C^\infty(M)$,
$$\{f,g\}=\Pi(df,dg)$$
with $\Pi\in\Gamma(\bigwedge^2 TM)$ such that $[\Pi,\Pi]=0$. Here the square 
brackets represent the Schouten-Nijenhuis bracket for multivector fields.
The latter property guarantees that the Poisson bracket satisfies the Jacobi identity
and all the properties of an associative Lie-Jordan algebra ($\hbar=0$) 
are fulfilled.

The data to construct the Poisson algebra have been give
in terms of geometrical objects, this will be also the 
case when we discuss the reduction procedures. One of the goals 
of the paper is to translate the geometric data
to the algebraic language, in order to compare with the quantum
case, where the discussion is carried out in purely algebraic terms.

\subsection{Reduction by symmetries}

Suppose that we have a Lie group acting on $M$ and we want to restrict our 
Poisson algebra to functions that are invariant under the action of the group.

The infinitesimal action of the group induces a family of vector fields 
$E\subset \vfield(M)$ that we assume to be an integrable distribution.
With these geometric data we introduce the subspace
$$\CE=\{f\in C^\infty(M)\ {\rm s. t.}\ Xf=0,\ \forall X\in\Gamma(E)\}$$
that is a Jordan subalgebra ($\CE\circ\CE\subset\CE$), 
but not necessarily a Lie subalgebra. When this is the case, i.e. if
$$\{\CE,\CE\}\subset \CE,$$
the restrictions of the operations to $\CE$ endows it with 
the structure of a Poisson subalgebra.

From the algebraic point of view the action of vector fields on
functions is a derivation of the Jordan algebra:
$$X(f\circ g)=Xf\circ g+ f\circ Xg,$$ 
and if this derivation is also a Lie derivation:
$$X\{f,g\}=\{Xf,g\}+ \{f, Xg\},$$ 
then one easily sees that $\CE$ is a Lie subalgebra. 

An example of the previous situation is when $E$ is a family of
Hamiltonian vector fields, i.e. there exists a Lie subalgebra
$\cal G\subset C^\infty(M)$ such that $X\in E$ if and only if there is a 
$g\in \CG$ with $Xf=\{g,f\}$ for any $f\in C^\infty(M)$.
This kind of derivations, defined through the Lie product,  
are called inner derivations, they are always Lie derivations and
therefore they define a Lie-Jordan subalgebra 
with the procedure described above. 

\subsection{Reduction by constraints}

In this case the geometric input is a submanifold
$N\subset M$ and the goal is to define a Poisson algebra in
the set of smooth functions on $N$ or, at least, in a 
subset of it.

In order to carry out the algebraic reduction we introduce
the Jordan ideal of functions that vanish on $N$,
$${\cal I}=\{f\in C^\infty(M)\ {\rm s. t.}\  f\vert_N=0\}.$$
Then, the Lie normaliser of $\CI$,
$${\cal N}=\{g\in C^\infty(M)\ {\rm s. t.}\ \{g, {\cal I}\}\subset {\cal I}\},$$
is a Lie-Jordan subalgebra, as an straightforward computation shows, and 
$\CN\cap\CI$ is its Lie-Jordan ideal.
Therefore ${\cal N}/({\cal N}\cap{\cal I})$ inherits the structure of a Lie-Jordan 
algebra. 

The reduced algebra as written above does not seems to have a direct connection with 
the functions on $N$. In order to uncover this connection we use the second isomorphism
theorem for vector spaces
$${\cal N}/({\cal N}\cap{\cal I})\simeq({\cal N}+{\cal I})/{\cal I}$$
and taking into account that the quotient by $\CI$ can be identified with the restriction to 
$N$ the right hand side can be described as the restriction to $N$ of the functions in 
$\CN+\CI$. If $\CN+\CI=C^\infty(M)$ (the constraints are second class in Dirac's terminology \cite{Dirac}) we obtain a 
Poisson algebra structure in $C^\infty(N)$. The Poisson bracket, in this case, is restriction to $N$ of the Dirac 
bracket \cite{Dirac} in $M$ determined by the second class constraints.

\section{More general Poisson reductions}

One attempt to combine the previous two reductions to define a more general one
is  contained in \cite{Ma86}. We shall rephrase in algebraic terms the
original construction that was presented in geometric language.

The data are an embedded submanifold $\iota:N\rightarrow M$ of a Poisson manifold
and a sub\-bun\-dle $B\subset T_NM := \iota^*(TM)$. With these data we define
the Jordan ideal ${\cal I}=\{f\in C^\infty(M)\ {\rm s. t.}\  f\vert_N=0\},$
as before and the Jordan subalgebra 
${\cal B}=\{f\in C^\infty(M)\ {\rm s. t.}\ Xf=0\ \forall X\in \Gamma(B)\}$.
The goal is to define an associative Lie-Jordan structure in 
$\CB/(\CB\cap\CI)$.

Following \cite{Ma86} we assume that $\CB$ is also a Lie subalgebra, then
if $\CB\cap\CI$ is a Lie ideal of $\CB$ the sought reduction is possible.

However, the condition that $\CB$ is a subalgebra is a rather 
strong one\cite{Fa08} and, consequently,
the reduction procedure is much less general than initially expected. 
Actually, as we will show, it consists on a succesive application of the
reductions introduced in the previous section. 
One can prove the following result.

\begin{theorem}
With the previous definitions, if $\CB$
is not the whole algebra, i.e. $B\not= 0$, 
and in addition it is a Lie subalgebra, then the following hold:
\begin{itemize}
\item[a)] {$\CB\subset\CN:=\{g\in C^\infty(M)\ {\rm s. t.} \{\CI,g\}\subset\CI\}$.}
\item[b)] $\CB\cap\CI$ is Poisson ideal of $\CB$.
\item[c)] $\CB/(\CB\cap \CI)$ always inherits a Poisson bracket.
\item[d)] Take another $0\not=B'\subset T_N(M)$ and define $\CB'$ accordingly.
If $B\cap TN=B'\cap TN$ $\Leftrightarrow$ $\CB+\CI=\CB'+\CI$ by the second 
isomorphism theorem we have
 $$\CB/(\CB\cap\CI)\simeq{(\CB+\CI)/\CI}\simeq\CB'/(\CB'\cap\CI)$$
and the two Poisson brackets induced on $(\CB+\CI)/\CI$ coincide.
\end{itemize} 
\end{theorem}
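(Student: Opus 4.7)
The plan is to prove (a) first as the main technical content and then derive (b)--(d) as routine algebraic consequences. Geometrically, (a) is equivalent to saying that the Hamiltonian vector field $X_g=\Pi^\sharp(dg)$ of every $g\in\CB$ is tangent to $N$ along $N$: indeed $g\in\CN$ iff $\{g,f\}\vert_N=0$ for all $f\in\CI$, and since every conormal covector at $p\in N$ is realised as $df_p$ for some $f\in\CI$, this is the pointwise condition $\Pi_p^\sharp(dg_p)\in T_pN$. Dually $g\in\CB$ forces $dg\vert_N\in B^\circ$, so (a) amounts to the bundle-level statement $\Pi^\sharp(B^\circ)\subset TN$ along $N$.

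To deduce this from $\{\CB,\CB\}\subset\CB$, I would fix $g_1,g_2\in\CB$ and $X\in\Gamma(B)$, extend $X$ to a vector field $\tilde X$ on a neighbourhood of $N$ in $M$, and apply the Leibniz identity
\[
\tilde X\{g_1,g_2\}=(\CL_{\tilde X}\Pi)(dg_1,dg_2)+\{\tilde Xg_1,g_2\}+\{g_1,\tilde Xg_2\}.
\]
The left-hand side vanishes on $N$ by the Lie subalgebra hypothesis, and since $\tilde Xg_i$ vanishes on $N$ its differential at $p\in N$ lies in $N_p^*N$, so the last two terms pair $\Pi_p^\sharp$ between conormal covectors and annihilator covectors in $B_p^\circ$. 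As $g_i$ runs over $\CB$ and $\tilde X$ varies among extensions, $dg_i\vert_p$ sweeps $B_p^\circ$ while $d(\tilde Xg_i)_p$ sweeps $N_p^*N$; matching these variations against the extension-ambiguity of the $\CL_{\tilde X}\Pi$ term one extracts the pointwise assertion $\Pi_p^\sharp(B_p^\circ)\subset T_pN$. The hypothesis $B\neq 0$ enters precisely to guarantee that these test equations are non-vacuous.

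Once (a) is in place, (b) is immediate: $\{\CB,\CB\cap\CI\}\subset\CB$ by Lie-closedness of $\CB$ and $\{\CB,\CI\}\subset\CI$ by (a), hence $\{\CB,\CB\cap\CI\}\subset\CB\cap\CI$; the Jordan-ideal property is automatic since $\CI$ is a Jordan ideal of $C^\infty(M)$. Part (c) is then the standard fact that the quotient of a Lie-Jordan algebra by a Lie-Jordan ideal inherits the structure, which for $\hbar=0$ is a Poisson algebra. For (d) I first verify that $\CB+\CI$ is itself a Lie subalgebra of $C^\infty(M)$, using $\{\CB,\CB\}\subset\CB$, $\{\CB,\CI\}\subset\CI$ (from (a)) and the trivial $\{\CI,\CI\}\subset C^\infty(M)$ combined with the normaliser property, so that $(\CB+\CI)/\CI$ carries a Poisson structure; the second isomorphism theorem then supplies the Poisson isomorphism with $\CB/(\CB\cap\CI)$. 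The equivalence $B\cap TN=B'\cap TN\Leftrightarrow\CB+\CI=\CB'+\CI$ follows by identifying $(\CB+\CI)/\CI$ with the smooth functions on $N$ annihilated by $\Gamma(B\cap TN)$, so only $B\cap TN$ enters; and both induced Poisson brackets coincide because each is produced by restriction of the ambient bracket on $M$.

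The hard step is clearly (a): the hypothesis is a second-order differential condition on pairs of functions, while the desired conclusion is a first-order, pointwise tangency statement. The delicate book-keeping is to track how the extension $\tilde X$ enters the Leibniz formula above and match the several $\CL_{\tilde X}\Pi$ ambiguities against the conormal pieces of the $d(\tilde Xg_i)$ terms, in order to isolate the clean geometric identity $\Pi^\sharp(B^\circ)\subset TN$.
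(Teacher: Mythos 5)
Your parts (b), (c) and the outline of (d) are essentially the paper's, and your geometric reformulation of (a) as the bundle statement $\Pi^\sharp(B^\circ)\subset TN$ along $N$ is a correct and useful way to read the claim. The genuine gap is in your derivation of that statement, which you yourself identify as the hard step and then do not carry out. The identity $\tilde X\{g_1,g_2\}=(\CL_{\tilde X}\Pi)(dg_1,dg_2)+\{\tilde Xg_1,g_2\}+\{g_1,\tilde Xg_2\}$ cannot yield the pointwise conclusion by ``matching variations against the extension-ambiguity'': if you replace $\tilde X$ by $\tilde X+Y$ with $Y\vert_N=0$, the same identity applied to $Y$ shows that the change in $(\CL_{\tilde X}\Pi)(dg_1,dg_2)\vert_N$ is exactly minus the change in the other two terms, so the relation transforms covariantly and varying the extension produces no new constraints. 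For a fixed extension the term $(\CL_{\tilde X}\Pi)(dg_1,dg_2)\vert_N$ is an uncontrolled quantity involving first derivatives of $\Pi$, and nothing in your argument eliminates it; you therefore never isolate the pairing $\Pi_p(\eta,\xi)$ for $\eta\in N_p^*N$, $\xi\in B_p^\circ$. (A smaller point: you say $B\neq 0$ enters to make the test equations non-vacuous, but its real role is that the conclusion is simply false for $B=0$, where $B^\circ=T_N^*M$ and $\CB=C^\infty(M)$.)

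The missing idea, which is the engine of the paper's proof, is that second-order vanishing beats the first-order condition defining $\CB$: for any $g\in\CI$ one has $g^2\in\CB$ automatically (Leibniz: $X(g^2)\vert_N=2g\vert_N\,Xg\vert_N=0$), and more generally $hg\in\CB_U$ once $g\in\CB_U\cap\CI$. The paper argues by contradiction: if some $f\in\CB$, $g\in\CI$ have $\{g,f\}\neq 0$ on an open $U\subset N$, then Lie-closedness gives $\{g^2,f\}=2g\{g,f\}\in\CB$, and dividing by the nonvanishing factor $\{g,f\}$ forces $g\in\CB_U$; feeding $hg$ back in and using $\{hg,f\}=h\{g,f\}+g\{h,f\}$ forces every $h\in C^\infty(M)$ into $\CB_U$, i.e.\ $B\vert_U=0$, contradicting the hypothesis that $B$ is a nonzero subbundle. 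If you want to salvage your jet-level approach, the analogous move is to vary the $2$-jet of $g_1$ at $p$ by adding products of elements of $\CI$ (which stay in $\CB$ without changing $dg_{1,p}$) and observe that only the term $\Pi_p(d(\tilde Xg_1)_p,dg_{2,p})$ is affected; but that is precisely the paper's trick in disguise, and as written your proof of (a) --- and hence of the theorem --- is incomplete.
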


\begin{proof} We prove a) by contradiction. Assume that $\CB\not\subset\CN$ then
there exist functions $f\in\CB$, $g\in\CI$ and an open set $U\subset N$, such that
$$\{g,f\}(p)\not=0,\quad {\rm for\ any}\ p\in U.$$
But certainly $g^2\in\CB$ as a simple consequence of the Leibniz rule for the action
of vector fields. Therefore, using that $\CB$ is a Lie subalgebra we have  
$$ \{g^2,f\}=2g\{g,f\}\in\CB$$
and due to the fact that $g\in \CI$ and $\{g,f\}(p)\not=0$ this implies
$g\in\CB_U$, where $\CB_U$ is the set of functions whose restriction to $U$ coincide with
the restriction of someone in $\CB$.

So far we know that $g\in\CB_U\cap\CI$ and therefore $hg\in\CB_U\cap\CI$ for any $h\in C^\infty(M)$.
But using that $\CB_U$ is a Lie subalgebra as it is $\CB$ (due to the local character of the Poisson bracket) 
we have
$$ \{hg,f\}=h\{g,f\}+g\{h,f\}\in\CB_U\Rightarrow h\{g,f\}\in\CB_U\Rightarrow  h\in\CB_U.$$
But $h$ is any function, then $\CB_U=C^\infty(M)$ and $B\vert_U=0$ which implies
$B=0$ as we assumed that it is a subbundle. This contradicts the hypothesis of the theorem
and a) is proved.

b) follows immediately from a). Actually if $\CB\subset\CN$ we have $\{\CI,\CB\}\subset\CI$
and moreover $\{\CB,\CB\}\subset\CB$. Then $\{\CI\cap\CB,\CB\}\subset\CI\cap\CB$.

c) is a simple consequence of the fact that $\CB$ is a Lie-Jordan subalgebra and $\CB\cap\CI$ its 
Lie-Jordan ideal.

To prove d) take $f_i\in\CB$ and $f'_i\in\CB'$, $i=1,2$, such that $f_i+\CI=f'_i+\CI$.
The Poisson bracket in $(\CB+\CI)/\CI$ is given by
$$\{f_1+\CI,f_2+\CI\}=\{f_1,f_2\}+\CI\in (\CB+\CI)/\CI,$$
where for simplicity we use the same notation for the Poisson bracket in the
different spaces, which should not lead to confusion.
We compute now the alternative expression $\{f'_i+\CI,f'_2+\CI\}=\{f'_1,+f'_2\}+\CI$.
We assumed $f'_i=f_i+g_i$ with $g_i\in\CI\cap(\CB+\CB')$ and 
therefore, as a consequence of a), we have  $\{f_1,g_2\},\{g_1,f_2\},\{g_1,g_2\}\in\CI$, 
which implies  
$$\{f'_1,+f'_2\}+\CI=\{f_1,f_2\}+\CI$$
and the proof is completed.
\end{proof}

Last property implies that the reduction process does not depend effectively on $B$ 
but only on $B\cap TN$. 
Actually one can show that this procedure is simply a successive application of the two previous reductions
presented before: first we reduce the Poisson bracket by constraints
to $N$ and then by symmetries with $E=B\cap TN$.

For completeness we would like to comment on the situation when $B=0$.
In this case $\CB=C^\infty(M)$ and, of course, it is always a Lie subalgebra. 
Under these premises the reduction is not possible unless $\CI$ is a Lie ideal
which is not the case in general. Anyhow, if the conditions to perform the reduction
are met and we consider some  $B'\not=0$
such that $B'\cap TN=0$ and $\CB'$ is a Lie subalgebra,
then we obtain again property d) of the theorem: the Poisson brackets 
induced by $B=0$ and $B'$ on  $\CB/\CI$ are the same.  

The question then is if given $N$ and $B$ there is a more general way to obtain the desired 
associative Lie-Jordan structure in $\CB/(\CB\cap\CI)$ where $\CB$ and $\CI$ are defined as 
before. 

To answer this question we will rephrase the problem in purely algebraic terms.  We shall assume that together 
with an associative Lie-Jordan algebra we are given a Jordan ideal $\CI$ and a Jordan subalgebra $\CB$. 
Of course, a particular example of this is the geometric scenario discussed before. 
Under these premises $\CB\cap\CI$ is a Jordan ideal of $\CB$ and $\CB+\CI$ is a Jordan subalgebra,
then it is immediate to define Jordan structures 
on $\CB/(\CB\cap\CI)$ and on $(\CB+\CI)/\CI$ such that the corresponding projections $\pi_B$ and $\pi$ are Jordan homomorphisms. 
Moreover,  the natural isomorphism between both spaces is also a Jordan 
isomorphism. The problem is 
whether we can also induce a Poisson bracket in the quotient spaces. 
One first step to carry out this program is contained in the following theorem.

\begin{theorem}\label{th:genred}
Given an associative Lie-Jordan algebra, $(\CL,\circ,\{\ ,\ \})$, 
a Jordan ideal $\CI$ and a Jordan subalgebra $\CB$, assume
\begin{equation}\label{weakcond}
{\rm a)}\ \{\CB,\CB\}\subset \CB+\CI,\qquad {\rm b)}\ \{\CB,\CB\cap\CI\}\subset\CI,
\end{equation}
then the following commutative diagram
\begin{equation}\label{inducedpoisson}
\begin{psmatrix}[mnode=R,colsep=1.5cm,rowsep=1.5cm]
\CB\times\CB && \CB+\CI\\
\CB/(\CB\cap\CI)\times\CB/(\CB\cap\CI)&
\CB/(\CB\cap\CI)&(\CB+\CI)/\CI
\end{psmatrix}
\psset{nodesep=0.3cm}
\everypsbox{\scriptstyle}
\ncLine{->}{1,1}{1,3}\Aput{\{\ ,\ \}}
\ncLine{->}{1,3}{2,3}\Bput{\pi}
\ncLine{->}{1,1}{2,1}\Aput{\pi_B\times\pi_B}
\ncLine{<->}{2,2}{2,3}\Aput{\simeq}
\ncLine{->}{2,1}{2,2}
\end{equation}
defines a unique bilinear, antisymmetric operation in $\CB/(\CB\cap\CI)$
that satisfies the Leibniz rule. 
\end{theorem}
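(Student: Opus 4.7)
The plan is to read off the unique candidate bracket from the diagram and then verify, in turn, well-definedness, the linear properties, and Leibniz. Given classes $[f], [g] \in \CB/(\CB\cap\CI)$, pick representatives $f, g \in \CB$; hypothesis (a) places $\{f, g\}$ in $\CB + \CI$, so $\pi(\{f, g\}) \in (\CB + \CI)/\CI$ is meaningful, and transport through the canonical isomorphism $\CB/(\CB\cap\CI) \simeq (\CB + \CI)/\CI$ (which is a Jordan isomorphism, since $\CI$ is a Jordan ideal of $\CL$ and $\CB$ a Jordan subalgebra) yields $\{[f], [g]\} \in \CB/(\CB\cap\CI)$. Commutativity of the diagram forces this definition, so uniqueness is immediate once existence is established.

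The only nontrivial point is well-definedness. If $f' = f + h$ and $g' = g + k$ with $h, k \in \CB \cap \CI$, bilinearity of $\{\ ,\ \}$ in $\CL$ gives
\[
\{f', g'\} - \{f, g\} = \{h, g\} + \{f, k\} + \{h, k\},
\]
and each of these three terms lies in $\CI$ by hypothesis (b); this is precisely where (b) earns its keep. Hence $\{f', g'\}$ and $\{f, g\}$ have the same image in $(\CB + \CI)/\CI$, so the operation descends unambiguously to $\CB/(\CB\cap\CI)$.

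Bilinearity and antisymmetry then come for free, since they are inherited pointwise from the ambient Poisson bracket on $\CL$ through the linear maps $\pi_B$ and $\pi$. For the Leibniz rule, start from the identity
\[
\{f \circ g, h\} = f \circ \{g, h\} + \{f, h\} \circ g
\]
valid in $\CL$ for $f, g, h \in \CB$. By (a), every term lies in $\CB + \CI$; writing $\{g, h\} = b_1 + i_1$ and $\{f, h\} = b_2 + i_2$ with $b_j \in \CB$, $i_j \in \CI$, the Jordan ideal property gives $f \circ i_1, i_2 \circ g \in \CI$. Projecting the identity to $(\CB + \CI)/\CI$ and transporting it through the Jordan isomorphism with $\CB/(\CB\cap\CI)$ reproduces the Leibniz rule for the induced bracket.

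The main obstacle I anticipate is purely notational: one must track the Jordan product, the two quotients, and the isomorphism simultaneously, and verify that hypothesis (b) is exactly what absorbs the ambiguity in the choice of representatives (including the mixed term $\{h,k\}$, which is only controlled because $h,k \in \CB\cap\CI$ and (b) applies in either slot). Once this bookkeeping is fixed at the outset, the theorem reduces to a routine verification; notably, no appeal to Jacobi is needed, since the statement only concerns bilinearity, antisymmetry, and Leibniz.
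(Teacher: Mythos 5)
Your proposal is correct and follows essentially the same route as the paper: well-definedness is reduced to hypothesis (b) absorbing the ambiguity from $\ker(\pi_B)=\CB\cap\CI$ in either slot, bilinearity and antisymmetry are inherited through the linear maps of the diagram, and Leibniz descends from the ambient identity using that $\CI$ is a Jordan ideal and the projections are Jordan homomorphisms. Your version merely makes explicit (via the decomposition $\{g,h\}=b_1+i_1$) what the paper compresses into one sentence.
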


\begin{proof} In order to show that we define uniquely an operation we have to check that
$\pi_B$ is onto and that $\ker(\pi_B)\times\CB$ and $\CB\times \ker(\pi_B)$ are mapped into $Ker(\pi)=\CI$. 
But first property holds because $\pi_B$ is a projection and the second one is a consequence of 
(\ref{weakcond},b). The bilinearity of the induced operation follows form
the linearity or bilinearity of all the maps involved in the diagram
and its antisymmetry derives form that of $\{\ ,\ \}$.
Finally Leibniz rule is a consequence of the same property for the original Poisson bracket
and the fact that $\pi$ and $\pi_B$ are Jordan homomorphisms.
\end{proof}

The problem with this construction is that, in general, the bilinear operation does not
satisfy the Jacobi identity as shown in the following example.

\begin{example}\label{ejemplouno}
Consider $M={\mathbb R}^3\times{\mathbb R}^3$, with coordinates $({\bf x}, {\bf y})$
and Poisson bracket given by the bivector
$\Pi=\sum_{i=1}^3\frac{\partial}{\partial_{x_i}}\wedge\frac{\partial}{\partial_{y_i}}$.
take $N=\{(0,0,x_3,{\bf y})\}$ and for a given $\lambda\in C^\infty(N)$
define $B={\rm span}\{\partial_{x_1},\partial_{x_2}-\lambda\partial_{y_1}\}
\subset T_N M$ and 
$$\CB=\{f\in C^\infty(M),\ {\rm s. t.}\ X f|_N=0, \forall\ X\in\Gamma (B)\}.$$
Notice that $T_NM$ is a direct sum of $B$ and $TN$,
therefore
we immediately get
$$\{\CB,\CB\}\subset\CB+\CI=C^\infty(M)\quad{\rm and}\quad\{\CB,\CB\cap\CI\}\subset\CI,$$ 
and we meet all the requirements to define a bilinear, antisymmetric operation on
$\CB/(\CB\cap\CI)\simeq C^\infty(N)$.

Using coordinates $(x^3,{\bf y})$ for $N$ the bivector field is
$$\Pi_N=\frac{\partial}{\partial_{x_3}}\wedge\frac{\partial}{\partial_{y_3}}+
\lambda \frac{\partial}{\partial_{y_1}}\wedge\frac{\partial}{\partial_{y_2}}$$
that does not satisfy the Jacobi identity unless 
$\partial_{x_3}\lambda=\partial_{y_3}\lambda=0$.
\end{example}

Now the problem is to supplement (\ref{weakcond}) with more conditions
to guarantee that the induced operation satisfies all the requirements
for a Poisson bracket. We do not know a simple description of the minimal 
necessary assumption but a rather general scenario is the following proposition:

\begin{proposition}
Suppose that in addition to the conditions of theorem \ref{th:genred} 
we have two Jordan subalgebras $\CB_+$, $\CB_-$ 
$${\CB_-}\subset\CB\subset{\CB_+}\quad{\it and}\quad  
{\CB_\pm+\CI}=\CB+\CI,$$
such that 
\begin{equation}\label{strongcond}
{\rm a)}\ 
\{\CB_-,\CB_-\}\subset\CB_+,\quad{\rm b)}\ \{\CB_-,\CB_+\cap\CI\}\subset \CI.
\end{equation}
Then the antisymmetric, bilinear operation induced by (\ref{inducedpoisson})  
is a Poisson bracket, i.e. it fulfils the Jacobi identity. 
\end{proposition}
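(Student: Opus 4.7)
The plan is to exploit the sandwich $\CB_-\subset\CB\subset\CB_+$ in order to produce especially cooperative representatives of each class in $\CB/(\CB\cap\CI)$. The hypothesis $\CB_-+\CI=\CB+\CI$, combined with the isomorphism $\CB/(\CB\cap\CI)\simeq(\CB+\CI)/\CI$, shows that every class admits a representative in $\CB_-$. To verify the Jacobi identity on three arbitrary classes I would pick $f,g,h\in\CB_-$ and work in the isomorphic model $(\CB+\CI)/\CI$, writing $[\,\cdot\,]$ for the projection $\pi$.

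By hypothesis (a), $\{f,g\}\in\CB_+\subset\CB+\CI$, and the diagram (\ref{inducedpoisson}) then gives $\{[f],[g]\}=[\{f,g\}]$. To iterate the bracket I pick a $\CB$-representative of $[\{f,g\}]$ by writing $\{f,g\}=b+i$ with $b\in\CB$ and $i\in\CI$. The decisive observation is that $i=\{f,g\}-b$ lies in $\CB_+$, because both $\{f,g\}$ and $b$ do, so in fact $i\in\CB_+\cap\CI$. Consequently $\{\{[f],[g]\},[h]\}=[\{b,h\}]=[\{\{f,g\},h\}-\{i,h\}]$, and by antisymmetry together with hypothesis (b), applied to $h\in\CB_-$ and $i\in\CB_+\cap\CI$, the term $\{i,h\}$ belongs to $\CI$. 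Therefore $\{\{[f],[g]\},[h]\}=[\{\{f,g\},h\}]$ whenever $f,g,h\in\CB_-$.

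Summing this identity over the three cyclic permutations of $f,g,h$ and invoking the Jacobi identity for $\{\cdot,\cdot\}$ in the ambient Lie-Jordan algebra $\CL$, the right-hand side collapses to the class of zero, which is exactly the Jacobi identity for the induced bracket.

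The crux of the argument, and the raison d'être for both $\CB_-$ and $\CB_+$, is the observation that the corrector $i$ inherits membership in $\CB_+$ from the identity $i=\{f,g\}-b$; without this, condition (b) could not be invoked, and one would reproduce the Jacobi obstruction exhibited in Example \ref{ejemplouno}. The rest of the proof is bookkeeping ensured by the diagram (\ref{inducedpoisson}) and the Jacobi identity of the ambient bracket.
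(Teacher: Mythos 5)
Your proposal is correct and follows essentially the same route as the paper: choose representatives in $\CB_-$, use condition (a) of (\ref{strongcond}) to place $\{f,g\}$ in $\CB_+$ so that the corrector relating it to a $\CB$-representative lies in $\CB_+\cap\CI$, then kill that corrector against $\CB_-$ via condition (b) and invoke the ambient Jacobi identity. The only cosmetic difference is that the paper re-chooses the representative of $\{f_{1,-},f_{2,-}\}$ inside $\CB_-$ rather than merely inside $\CB$, which changes nothing in the argument.
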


\begin{proof} To prove this statement consider any two functions 
$f_1,f_2\in\CB$ and, for $i=1,2$, denote by $f_{i,-}$ a function
in $\CB_-$ such that $f_i+\CI=f_{i,-}+\CI\subset \CB+\CI$.
Due to (\ref{weakcond}) we know that 
$$\{f_{1,-},f_{2,-}\}+\CI= \{f_{1},f_{2}\}+\CI,$$
but if (\ref{strongcond},a) also holds,  
$$\{f_{1,-},f_{2,-}\}\in \CB_+,$$
in addition we have that
$$\{f_{1,-},f_{2,-}\}_--\{f_{1,-},f_{2,-}\}\in \CB_+\cap\CI,$$
and using (\ref{strongcond},b)
$$\{\{f_{1,-},f_{2,-}\}_-,f_{3,-}\}+\CI
=
\{\{f_{1,-},f_{2,-}\},f_{3,-}\}+\CI.$$
Therefore the Jacobi identity for the reduced antisymmetric product
derives from that of the original Poisson bracket.
\end{proof}

Notice that the whole construction has been made in algebraic terms
and therefore it will have an immediate translation to the quantum 
realm. But before going to that scenario we reexamine
the example to show how it fits into
the general result.

\begin{example}
We take definitions and notations from example \ref{ejemplouno}.
Now let $\tilde\lambda$ be an arbitrary smooth extension of $\lambda$ to $M$, i.e.
$\tilde\lambda\in C^\infty(M)$ such that $\tilde\lambda\vert_N=\lambda$,
we define $E={\rm span}\{\partial_{x_1},\partial_{x_2}-\tilde\lambda\partial_{y_1}\}
\subset T M$ and $\CB_-=\{f\in C^\infty(M)\ {\rm s. t.}\ Xf=0,\ \forall X\in\Gamma(E)\}$.

If we define $\CB_+=\CB$, it is clear that $\CB_-\subset\CB\subset\CB_+$,
$\CB_\pm+\CI=\CB+\CI$ and $\{\CB_-,\CB_+\cap\CI\}\subset\CI$. 
But $\{\CB_-,\CB_-\}\subset\CB_+$ if and only if $\partial_{x_3}\lambda=\partial_{y_3}\lambda=0$.
\end{example}

Therefore, in our construction we can accommodate the most general situation 
in which the example
provides a Poisson bracket. 
We believe that this is not always the case, but we do not have any 
counterexamples.

\section{Final comments}
We want to end this contribution with a comment on the possible application of the 
reduction described in the previous section to quantum systems. 
In this case the Lie-Jordan algebra is non-associative and due to the associator identity
there is a deeper connection between the Jordan and Lie products. 
As a result the different treatment between the Jordan and the Lie
part, that we considered in the case of associative algebras, is not useful any more
and the natural thing to do is to consider a more {\it symmetric} prescription.

We propose a generalisation of the standard reduction procedure 
(the quotient of subalgebras by ideals) along similar lines to those
followed in the associative case. 

The statement of the problem is the following: given a Lie-Jordan algebra $\CL$ and two 
subspaces $\CB$, $\CS$ the goal is to induce a Lie-Jordan structure in the quotient
$\CB/(\CB\cap\CS)$.  

If we assume
\begin{eqnarray*}
\CB\circ\CB\subset\CB+\CS,\quad &\CB\circ(\CB\cap\CS)\subset\CS,\cr
[\CB,\CB]\subset\CB+\CS,\quad & [\CB,\CB\cap\CS]\subset\CS,
\end{eqnarray*}
then a diagram similar to (\ref{inducedpoisson}) allows to induce
commutative and anticommutative, bilinear operations in the quotient.
Now, in order to fulfil the ternary properties 
(Jacobi, Leibniz and associator identity) we need more conditions.
We can show that, again, it is enough to have two more subspaces 
$\CB_-\subset\CB\subset\CB_+$ such that $\CB_\pm+\CS=\CB+\CS$ and moreover
\begin{eqnarray*}
\CB_-\circ\CB_-\subset\CB_+,\quad&
\CB_-\circ( \CB_+\cap\CS)\subset\CS,\cr
[\CB_-,\CB_-]\subset\CB_+,\quad&
[\CB_-,(\CB_+\cap\CS)]\subset\CS.
\end{eqnarray*}
Then, under these conditions, one can correctly induce a Lie-Jordan structure 
in the quotient.

There are at least two aspects of this construction that need more work.
The first one is to find examples
in which this reduction procedure is relevant, similarly to what we did 
for the classical case in the previous section. 
The second problem is of topological nature:
given a Banach space structure in the big algebra $\CL$, compatible with its operations,
we can correctly induce a norm in the quotient provided $\CB$ and $\CS$ are closed subspaces.
However, the induced operations need not to be continuous in general;
though they are, if $\CB$ is a subalgebra and $\CS$ an ideal \cite{FFIM13}. 
The study of more general conditions for continuity
and compatibility of the norm will be the subject 
of further research.
    
\acknowledgments
 This work was partially supported by MEC grants FPA--2009-09638,
MTM2010-21186-C02-02, QUITEMAD programme and DGA-E24/2.  
G.M. would like to acknowledge the support provided by the Santander/UCIIIM Chair of Excellence programme 2011-2012. 
\vskip 1cm
%--------------------------------\-----------------------------------
%%  sub-Section  2.2

\end{document}